\newtheorem{theorem}{Theorem}[section]
\newtheorem{corollary}{Corollary}[theorem]
\newtheorem{lemma}[theorem]{Lemma}
\theoremstyle{definition}
\numberwithin{equation}{section}
\DeclareMathOperator{\Tr}{Tr}
\newcommand{\R}{\mathbb{R}}
\newcommand{\p}{\textcolor{white}{....}}
\title{Interpolation of Operators With Trace Inequalities Related To The Positive Weighted Geometric Mean}
\author{
Victoria M.~Chayes\\
Department of Mathematics\\
Rutgers University\\
Piscataway, NJ 08854 \\
\texttt{vc362@math.rutgers.edu} \\
}
\begin{document}

\maketitle

\begin{abstract}
There are various generalizations of the geometric mean $(a,b)\mapsto a^{1/2}b^{1/2}$ for $a,b\in \mathbb{R}^+$ to positive matrices, and we consider the standard positive geometric mean $(X,Y)\mapsto X^{1/2}(X^{-1/2}YX^{-1/2})^{1/2}X^{1/2}$. Much research in recent years has been devoted to relating the weighted version of this mean $X\#_{t}Y:=X^{1/2}(X^{-1/2}YX^{-1/2})^{t}X^{1/2}$ for $t\in [0, 1]$ with operators $e^{(1-t)X+tY}$ and $e^{(1-t)X/2}e^{tY}e^{(1-t)X/2}$ in Golden-Thompson-like inequalities. These inequalities are of interest to mathematical physicists for their relationship to quantum entropy, relative quantum entropy, and R\'{e}nyi divergences. However, the weighted mean is well-defined for the full range of $t\in\mathbb{R}$. In this paper we examine the value of $|||e^H\#_te^K|||$ and variations thereof in comparison to $|||e^{(1-t)H+tK}|||$ and $|||e^{(1-t)H}e^{tK}|||$ for any unitarily invariant norm $|||\cdot|||$ and in particular the trace norm, creating for the first time the full picture of interpolation of the weighted geometric mean with the Golden-Thompson Inequality. We expand inequalities known for $|||(e^{rH}\#_te^{rK})^{1/r}|||$ with $r>0$, $t\in [0,1]$ to the entire real line, and comment on how the exterior inequalities can be used to provide elegant proofs of the known inequalities for $t\in [0,1]$. We also characterize the equality cases for strictly increasing unitarily invariant norms.
\keywords{operator interpolation \and trace inequalities \and geometric matrix mean \and log majorization \and quantum entropy}
\end{abstract}

\section{Introduction}

\p Consider the following functions on positive definite complex matrices: \begin{align}
	&(X,Y)\mapsto e^{\log(X)+\log(Y)}, \\
	&(X,Y)\mapsto e^{\log(X)/2}e^{\log(Y)}e^{\log(X)/2}=X^{1/2}YX^{1/2}, \\
	&(X,Y)\mapsto X^{1/2}(X^{-1/2}YX^{-1/2})^{1/2}X^{1/2}.
\end{align}
Each can be seen as an operator analogue of the geometric mean function $(a,b)\mapsto a^{1/2}b^{1/2}$ defined on $[0,\infty)\times[0,\infty)$. In fact, the third is known as the \textit{geometric mean} of positive matrices, first introduced by Pusz and Woronowicz \cite{PuszWormean} in 1975 as a way of generalizing $\sqrt{xy}$ to sesquilinear forms. 

\p Mathematicians and mathematical physicists will recognize the first two operator functions from their placement in the Golden-Thompson Inequality, proven by Golden for non-negative definite matrices in 1965 \cite{Golden} and independently the same year by Thompson \cite{Thompson} for all Hermitian matrices $H$ and $K$, \begin{equation}\label{GT}
	\Tr[e^{H+K}]\leq \Tr[e^{H}e^K].	
\end{equation}
Characterizing the general relationship between these operator functions in a similar manner will be the goal of this paper.

\p Each of the operators functions can be extended as a function of $t\in \R$ as \begin{align}
	&(X,Y,t)\mapsto e^{(1-t)\log(X)+t\log(Y)}, \\
	&(X,Y,t)\mapsto e^{(1-t)\log(X)/2}e^{t\log(Y)}e^{(1-t)\log(X)/2}, \\
	&(X,Y,t)\mapsto X^{1/2}(X^{-1/2}YX^{-1/2})^{t}X^{1/2}=:X\#_t Y.
\end{align}
The third operator is referred to as the \textit{weighted geometric mean} for $t\in[0,1]$, and Hiai and Petz proved in 1993 \cite{HiaiPetz} the famous complement to the Golden-Thompson Inequality for Hermitian matrices $H$ and $K$ \begin{equation}
	\Tr\left[e^{H}\#_t e^{K} \right]\leq \Tr\left[e^{(1-t)H+tK} \right].\label{HP}
\end{equation}
This is a corollary of the more general identity proven in 1994 with log majorization techniques \cite{ANDO1994113} \cite{Araki1990} for all $r\geq 0$, $t\in[0,1]$, and unitarily invariant norms $|||\cdot|||$ \begin{equation}\label{inside}
	\big|\big|\big|\left(e^{rH}\#_t e^{rK} \right)^{1/r} \big|\big|\big|\leq \big|\big|\big| e^{(1-t)H+tK} \big|\big|\big|.
\end{equation}

\p The relationship between $\left(e^{rH}\#_t e^{rK} \right)^{1/r} $ and $e^{(1-t)H}e^{tK}$ has been fully characterized: first in \cite{Kian2019} for the range $t\in [-1, \frac{1}{2})$ then in \cite{Hiai2019} for the range $t\geq 0$, from which negative reciprical inequalities can be deduced. This paper for the first time pulls together the full range of $t$ the comparison of all three operators. We do so by proving a similar complimentary identity:

\begin{theorem}\label{theorem}
Let $H$ and $K$ be Hermitian, and $r>0$. Then 
\begin{equation}\label{theq}
\big|\big|\big|e^{(1-t)H+tK} \big|\big|\big|	\leq	\big|\big|\big|\left(e^{rH}\#_t e^{rK} \right)^{1/r} \big|\big|\big|\qquad\;\;\;\;\; t\leq 0,\;\; t\geq 1.
\end{equation}
for any unitarily invariant norm $|||\cdot|||$.
\end{theorem}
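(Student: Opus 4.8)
The plan is to prove the equivalent reversed log-majorization and then invoke the standard implication $x\prec_{\log}y\Rightarrow|||X|||\le|||Y|||$ for every unitarily invariant norm. First I would reduce to $t\ge 1$: from the symmetry $e^{rH}\#_t e^{rK}=e^{rK}\#_{1-t}e^{rH}$ together with $e^{(1-t)H+tK}=e^{(1-s)K+sH}$ for $s=1-t$, the case $t\le 0$ is exactly the case $t\ge 1$ applied to the pair $(K,H)$. Then I would absorb $r$: writing $A=e^{rH}$, $B=e^{rK}$ and $L:=(1-t)\log A+t\log B$, it suffices to prove, for all positive definite $A,B$ and $t\ge 1$,
\begin{equation*}
  \exp(L)\ \prec_{\log}\ A\#_t B ,
\end{equation*}
since applying this to $e^{rH},e^{rK}$ and taking $1/r$-th powers (which preserves log-majorization) yields \eqref{theq}. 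The determinant identity $\det(A\#_t B)=(\det A)^{1-t}(\det B)^{t}=\det\exp(L)$ equates the product of all eigenvalues on the two sides, so the full equality case of log-majorization is automatic and only the partial-product inequalities remain; the problem is thus a genuine \emph{reversal} of \eqref{inside}.

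The mechanism driving the reversal is a one-variable convexity. Fix a unit vector $x$, put $C=A^{-1/2}BA^{-1/2}$, and expand in the eigenbasis of $C$ to obtain
\begin{equation*}
  \phi(t):=\langle x,(A\#_t B)x\rangle=\langle A^{1/2}x,\,C^{t}A^{1/2}x\rangle=\sum_i w_i\,\gamma_i^{\,t},\qquad w_i\ge 0,
\end{equation*}
a nonnegative combination of the exponentials $\gamma_i^{\,t}=e^{t\log\gamma_i}$, hence a \emph{log-convex} function of $t$. Since $\phi(0)=\langle x,Ax\rangle$ and $\phi(1)=\langle x,Bx\rangle$, the chord comparison for the convex function $\log\phi$ gives, for $t\le 0$ and $t\ge 1$,
\begin{equation}\label{spade}
  \langle x,(A\#_t B)x\rangle\ \ge\ \langle x,Ax\rangle^{1-t}\,\langle x,Bx\rangle^{t},
\end{equation}
with the inequality reversing precisely on $t\in[0,1]$. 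This single computation is what I expect to deliver the ``elegant unified'' proof promised in the abstract, since the same log-convexity produces \eqref{inside} on $[0,1]$ and \eqref{theq} off $[0,1]$.

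To upgrade the quadratic-form bound \eqref{spade} to all unitarily invariant norms I would use antisymmetric tensor powers. Because $\wedge^k(A\#_t B)=(\wedge^k A)\#_t(\wedge^k B)$, and the derivation induced on $\wedge^k$ by $\log A$ is $\log(\wedge^k A)$ (so that $L$ lifts to $(1-t)\log(\wedge^k A)+t\log(\wedge^k B)$), the partial-product inequality $\prod_{i\le k}\lambda_i(\exp L)\le\prod_{i\le k}\lambda_i(A\#_t B)$ is exactly the top-eigenvalue case of the target applied to $\wedge^k A,\wedge^k B$. Together with the determinant identity this gives the full log-majorization, so the theorem reduces to the single spectral inequality
\begin{equation}\label{spec}
  \lambda_{\max}(A\#_t B)\ \ge\ e^{\lambda_{\max}(L)},\qquad t\ge 1 .
\end{equation}
Maximizing \eqref{spade} over unit vectors bounds the left side of \eqref{spec} below by $\exp\sup_x[(1-t)\log\langle x,Ax\rangle+t\log\langle x,Bx\rangle]$, so \eqref{spec} follows once I establish
\begin{equation}\label{crux}
  \sup_{\|x\|=1}\Big[(1-t)\log\langle x,Ax\rangle+t\log\langle x,Bx\rangle\Big]\ \ge\ \lambda_{\max}(L).
\end{equation}

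The main obstacle is \eqref{crux}. For $t\in[0,1]$ it is immediate from the Peierls--Bogoliubov bound $\log\langle x,Mx\rangle\ge\langle x,\log M\,x\rangle$ applied termwise, both weights being nonnegative; but for $t\ge 1$ the weight $1-t$ on the $A$-term is negative, the termwise bound reverses, and testing \eqref{crux} on the top eigenvector of $L$ fails. This is the genuinely non-commutative content of the theorem. I expect to resolve it by a variational argument: the supremum is attained not at the eigenvector of $L$ but near an eigenvector of the negatively weighted factor $A$, where the Jensen/Peierls gap for that factor closes, and analyzing the optimality condition $\tfrac{1-t}{\langle x,Ax\rangle}Ax+\tfrac{t}{\langle x,Bx\rangle}Bx=x$ should yield the bound; alternatively one can route \eqref{spec} through the reversed Araki--Lieb--Thirring inequality, which supplies the same reversal at the level of log-majorization. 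Finally I would record that $t=0,1$ give equality, both to confirm consistency and to initialize the equality-case analysis for strictly increasing norms.
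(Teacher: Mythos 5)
Your reductions are sound as far as they go: the symmetry $X\#_{1-t}Y=Y\#_tX$ does reduce everything to $t\ge 1$; absorbing $r$ via $A=e^{rH}$, $B=e^{rK}$ is legitimate because $x\mapsto x^{1/r}$ preserves log-majorization (Corollary \ref{flog}); the determinant identity together with antisymmetric tensor powers correctly reduces the claimed log-majorization $e^{(1-t)\log A+t\log B}\prec_{(\log)}A\#_tB$ to a top-eigenvalue inequality; and the log-convexity of $t\mapsto\langle x,(A\#_tB)x\rangle$, giving $\langle x,(A\#_tB)x\rangle\ge\langle x,Ax\rangle^{1-t}\langle x,Bx\rangle^{t}$ for $t\notin(0,1)$, is correct. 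The fatal problem is the step you yourself flag as the main obstacle: the inequality
\begin{equation*}
\sup_{\|x\|=1}\Big[(1-t)\log\langle x,Ax\rangle+t\log\langle x,Bx\rangle\Big]\ \ge\ \lambda_{\max}\big((1-t)\log A+t\log B\big)
\end{equation*}
is \emph{false} for $t>1$, so no variational analysis of the optimality condition can deliver it. Take $t=2$, $\log A=\mathrm{diag}(1,-1)$, $\log B=\left(\begin{smallmatrix}0&1\\1&0\end{smallmatrix}\right)$. Then $2\log B-\log A$ has top eigenvalue $\sqrt5$, so the right-hand side is $e^{\sqrt5}\approx 9.36$, whereas maximizing $\langle x,Bx\rangle^{2}/\langle x,Ax\rangle$ over unit vectors (real $x=(\cos\phi,\sin\phi)$ suffice here) yields a supremum of only about $8.7$. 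The theorem itself survives --- $\lambda_{\max}(A\#_2B)=\lambda_{\max}(BA^{-1}B)\approx 11.5>e^{\sqrt5}$ --- but your quadratic-form lower bound is too lossy at the relevant vectors to reach $e^{\lambda_{\max}(L)}$, so the chain breaks exactly where you left it unproved. (Your side remark that the same computation yields Equation \ref{inside} on $[0,1]$ has the mirror-image problem: there Peierls--Bogoliubov gives $\sup_x\langle x,Ax\rangle^{1-t}\langle x,Bx\rangle^{t}\ge e^{\lambda_{\max}(L)}$, which is the wrong direction for an upper bound on $\lambda_{\max}(A\#_tB)$.)

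The paper avoids any pointwise quadratic-form estimate and works entirely at the level of log-majorization: for $t\ge1$ it applies Araki's log-majorization (Theorem \ref{gtlog}, i.e.\ the Araki--Lieb--Thirring inequality) with $A^{1/2}=e^{rH/2t}$ and $B=e^{-rH/2}e^{rK}e^{-rH/2}$ to obtain $(e^{rH}\#_te^{rK})^{1/r}\succ_{(\log)}\big(e^{r(1-t)H/2t}e^{rK}e^{r(1-t)H/2t}\big)^{t/r}$, and then the Lie--Trotter complement of Golden--Thompson (Corollary \ref{6.39}) to push the latter down to $e^{(1-t)H+tK}$. Your parenthetical suggestion to route the spectral bound through a ``reversed Araki--Lieb--Thirring inequality'' is the right instinct and is essentially the paper's argument, but as written it is a placeholder rather than a proof; to repair your write-up you should replace the quadratic-form mechanism with that two-step log-majorization chain.
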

Theorem \ref{theorem} is proven in Section 4. It should be noted that Theorem \ref{theorem} can be deduced in the $t\in [-1, \frac{1}{2})$ in \cite{Kian2019}; here we provide proof for the entire range together.

\p This allows us for the first time to characterize the relationship between the three operator functions for all $t\in\R$:
\begin{theorem}\label{bigpicture}
	For Hermitian matrices $H$ and $K$, and $t\in\R$ and any unitarily invariant norm $|||\cdot|||$, then \begin{align}
		&\big|\big|\big|e^H\#_t e^K  \big|\big|\big|\leq \big|\big|\big|e^{(1-t)H+tK} \big|\big|\big|\leq \big|\big|\big|e^{(1-t)H}e^{tK} \big|\big|\big|\qquad 0\leq t\leq 1\label{th11}  \\	
		&\big|\big|\big|e^{(1-t)H+tK} \big|\big|\big|\leq \big|\big|\big|e^{(1-t)H}e^{tK} \big|\big|\big|\leq \big|\big|\big|e^H\#_t e^K  \big|\big|\big|\qquad1\leq t\leq 2\label{th12}	\\
		&\big|\big|\big|e^{(1-t)H+tK} \big|\big|\big|\leq \big|\big|\big|e^H\#_t e^K  \big|\big|\big|\leq \big|\big|\big|e^{(1-t)H}e^{tK} \big|\big|\big| \label{th13}	\qquad \;\;2\leq t.
	\end{align}
\end{theorem}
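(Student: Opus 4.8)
The plan is to prove all three displays by locating each of the three norms relative to $|||e^{(1-t)H+tK}|||$, which will turn out to be the smallest in every regime. Throughout I write $X=e^H$, $Y=e^K$ (positive definite) and abbreviate $A_t:=|||X\#_tY|||$, $M_t:=|||e^{(1-t)H+tK}|||$, and $P_t:=|||X^{1-t}Y^t|||$. The first step is a symmetry reduction: since $X\#_tY=Y\#_{1-t}X$, since $e^{(1-t)H+tK}$ is symmetric in the pair $((1-t)H,\,tK)$, and since $|||UV|||=|||VU|||$ whenever $U,V$ are positive (the eigenvalues of $VU^2V$ and of $UV^2U$ coincide), all three norms are invariant under the involution $(H,K,t)\mapsto(K,H,1-t)$. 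Because this involution sends $[1,2]$ to $[-1,0]$ and $[2,\infty)$ to $(-\infty,-1]$ while fixing $[0,1]$, it suffices to treat $t\ge 1$ together with the trivial range $[0,1]$; the cases $t\le 0$ then follow verbatim.

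Two of the required comparisons are essentially already available. First, $M_t\le P_t$ holds for \emph{every} real $t$: applying the norm form of Golden--Thompson, $|||e^{C+D}|||\le|||e^{C/2}e^De^{C/2}|||$, with $C=(1-t)H$ and $D=tK$ gives $M_t\le|||e^{C/2}e^De^{C/2}|||$, and since $e^{C/2}e^De^{C/2}$ is positive with eigenvalues equal to those of $e^Ce^D=X^{1-t}Y^t$, Weyl's majorization of eigenvalues by singular values yields $|||e^{C/2}e^De^{C/2}|||\le P_t$. Second, \eqref{inside} at $r=1$ gives $A_t\le M_t$ on $[0,1]$, while Theorem~\ref{theorem} at $r=1$ gives $M_t\le A_t$ for $t\le 0$ and $t\ge 1$. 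Together these already establish \eqref{th11}, namely $A_t\le M_t\le P_t$, and, for $t\ge 1$, the two lower bounds $M_t\le A_t$ and $M_t\le P_t$ appearing in \eqref{th12} and \eqref{th13}.

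It remains to compare $A_t$ with $P_t$ for $t\ge 1$, and this is the crux. The claim is that $X^{1-t}Y^t\prec_{\log}X\#_tY$ for $1\le t\le 2$ (giving $P_t\le A_t$) while $X\#_tY\prec_{\log}X^{1-t}Y^t$ for $t\ge 2$ (giving $A_t\le P_t$), the two regimes meeting at $t=2$, where $X\#_2Y=YX^{-1}Y$ and $X^{-1}Y^2$ share the eigenvalues $\lambda_j(X^{-1}Y^2)$ and the comparison collapses to Weyl's inequality. Because $\det(X\#_tY)=(\det X)^{1-t}(\det Y)^t=|\det(X^{1-t}Y^t)|$, both statements are genuine log-majorizations, so by passing to antisymmetric tensor powers --- under which $\wedge^k(X\#_tY)=(\wedge^kX)\#_t(\wedge^kY)$ and $\wedge^k(X^{1-t}Y^t)=(\wedge^kX)^{1-t}(\wedge^kY)^t$ --- each reduces to a single spectral-radius comparison between $\lambda_{\max}(X\#_tY)$ and $\|X^{1-t}Y^t\|$ for arbitrary positive definite $X,Y$. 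I would prove these operator-norm bounds by writing $X\#_tY=X^{1/2}C^tX^{1/2}$ with $C=X^{-1/2}YX^{-1/2}$ and estimating against $\|X^{1-t}Y^t\|^2=\lambda_{\max}(X^{1-t}Y^{2t}X^{1-t})$ by means of the Araki--Lieb--Thirring log-majorization $(P^{1/2}QP^{1/2})^r\prec_{\log}P^{r/2}Q^rP^{r/2}$ for $r\ge 1$ (reversed for $r\le 1$); the threshold between the two directions is exactly where the relevant exponent crosses $1$, which is the analytic origin of the crossover at $t=2$ and, after the symmetry, at $t=-1$. This step is where I expect the real difficulty: log-majorization is not preserved under left multiplication by a fixed factor such as the outer $X^{1-t}$, so the Araki--Lieb--Thirring estimate cannot be substituted directly, and the inequality for $1\le t\le 2$ --- that the extrapolated geometric mean dominates the full ordered product --- must be extracted with care.

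With the $A_t$--$P_t$ comparison in hand the three displays assemble immediately. For $t\ge 2$ we combine $M_t\le A_t$ with $A_t\le P_t$ to obtain $M_t\le A_t\le P_t$, which is \eqref{th13}; for $1\le t\le 2$ we combine $M_t\le P_t$ with $P_t\le A_t$ to obtain $M_t\le P_t\le A_t$, which is \eqref{th12}; and \eqref{th11} was secured above. Finally, applying the involution $(H,K,t)\mapsto(K,H,1-t)$ transports \eqref{th13} for $t\ge 2$ to \eqref{th13} for $t\le -1$ and \eqref{th12} for $1\le t\le 2$ to \eqref{th12} for $-1\le t\le 0$, completing every case.
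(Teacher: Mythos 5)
Your overall architecture matches the paper's: Golden--Thompson gives $|||e^{(1-t)H+tK}|||\le|||e^{(1-t)H}e^{tK}|||$ for all $t$, Equation \eqref{inside} at $r=1$ and Theorem \ref{theorem} at $r=1$ locate the geometric mean relative to $|||e^{(1-t)H+tK}|||$ on $[0,1]$ and on its complement, and the symmetry \eqref{switch} transports $t\ge 1$ to $t\le 0$. All of that is sound and is exactly how the paper assembles the three displays. The problem is the step you yourself flag as ``the crux'': the comparison between $|||e^H\#_te^K|||$ and $|||e^{(1-t)H}e^{tK}|||$ for $t\ge1$, with the crossover at $t=2$. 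You state the two log-majorizations in the correct directions and correctly reduce them, via antisymmetric tensor powers, to a single operator-norm estimate, but you then concede that Araki--Lieb--Thirring cannot be substituted directly because log-majorization does not survive multiplication by the outer factor $X^{1-t}$, and you leave the estimate ``to be extracted with care.'' That is precisely the nontrivial content of this regime, and without it neither \eqref{th12} nor \eqref{th13} is established; what you have written is a correct reduction to an unproved inequality, not a proof.

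The paper closes this gap by importing the log-majorization results \eqref{lthH1}--\eqref{lthH2} of Hiai (2019), whose $r=1$ specializations give exactly the two directions you need (note $r=1\ge\max\{t/2,t-1\}$ iff $t\le2$ and $r=1\le\min\{t/2,t-1\}$ iff $t\ge2$, which is the analytic origin of the crossover), and it additionally supplies an independent proof of the $1\le t\le2$, $r=1$ case in Theorem \ref{mylog1}. The mechanism there is not Araki--Lieb--Thirring but the Furuta inequality: after the antisymmetric-tensor reduction one assumes $e^H\#_te^K\le I$, unwinds the geometric mean using the identity $(BAB^\ast)^r=BA^{1/2}(A^{1/2}B^\ast BA^{1/2})^{r-1}A^{1/2}B^\ast$, and applies Furuta with $p=q=(t-1)^{-1}$ to conclude $e^{(1-t)H/2}e^{tK}e^{(1-t)H/2}\le I$. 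If you want a self-contained argument you should reproduce something of that form; your ALT-based sketch, as you note, does not go through. One further caution: your claimed majorization $X^{1-t}Y^t\prec_{\log}X\#_tY$ is phrased for the singular values of the non-normal product $X^{1-t}Y^t$, which is a strictly stronger statement than the symmetrized form $e^{(1-t)H/2}e^{tK}e^{(1-t)H/2}\prec_{\log}e^H\#_te^K$ that the cited results actually provide (Weyl's inequality runs the wrong way here, bounding the symmetrized norm \emph{above} by $|||X^{1-t}Y^t|||$), so even granting the known results you would need to address that discrepancy rather than assert the stronger form.
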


\p Note that as the trace is a unitarily invariant norm for positive matrices, that the above inequalities all hold for the traces of the matrices. Complimentary negative inequalities can be found taking $X\#_{1-t}=Y\#_t X$.

\begin{proof}
	Equation \ref{th11} comes from taking the $r=1$ case of Equation \ref{inside}, combined with the Golden-Thompson Inequality. Equation \ref{th12} comes taking the $r=1$ case in Equation \ref{theq} and Equation \ref{thH1} (proven in \cite{Hiai2019} and discussed in Section 4) and the Golden-Thompson Inequality for the $t$ positive case, and using Relationship \ref{switch} to extend to the $t$ negative case. Equation \ref{th13} comes from taking the $r=1$ case in Equations \ref{theq} and Equation \ref{thH2} (also proven in \cite{Hiai2019}  and discussed in Section 4) and the Golden-Thompson Inequality for the $t$ positive case, and once more using Relationship \ref{switch} to extend to the $t$ negative case.
	
	Theorem 1 in the case of the trace norm is illustrated in Figure 1.
	
\end{proof}

\begin{figure*}
	\includegraphics[width=\textwidth]{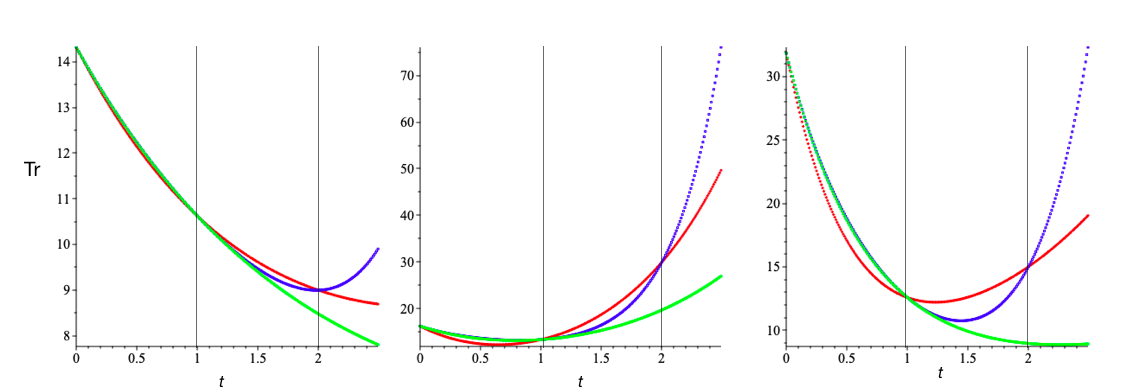}
	\caption{Comparison of 	$\Tr[e^H\#_t e^K ]$ in red, $\Tr[e^{(1-t)H+tK}]$ in green, and $\Tr[e^{(1-t)H}e^{tK}]$ in blue calculated numerically for three sets of randomly generated positive matrices $H$ and $K$.}
	\label{fig:1}      
\end{figure*}

\p Finally, we characterize the equality cases for all of the inequalities in Theorems 1 and 2:
\begin{theorem}\label{theqcase}
	For any strictly increasing unitarily invariant norm $|||\cdot|||$, there is equality in any of the inequalities of Equations \ref{th11}, \ref{th12}, or \ref{th13} if and only if $H$ and $K$ commute. Furthermore, for any strictly increasing unitarily invariant norm $|||\cdot|||$, there is equality in Equation \ref{theq} if and only if $H$ and $K$ commute.
\end{theorem}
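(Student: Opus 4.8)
The forward implication is a one-line computation. If $HK=KH$ then $H$ and $K$ are simultaneously diagonalizable, and writing $H=\sum_j h_jP_j$, $K=\sum_j k_jP_j$ with common spectral projections $P_j$ one checks directly that
\[
e^H\#_t e^K=\sum_j e^{(1-t)h_j+tk_j}P_j=e^{(1-t)H+tK}=e^{(1-t)H}e^{tK}
\]
for every $t\in\R$, so all three operators coincide and equality holds in every inequality of Theorem~\ref{bigpicture} and in \eqref{theq}. I would flag at the outset that the two values $t=0,1$ must be excluded from the converse: there all three operators collapse identically to $e^H$ or $e^K$, so equality holds regardless of $H,K$, and the characterization can only be asserted for the remaining $t$.

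For the converse I would exploit that each inequality in question is underlain by a \emph{log-majorization} between two positive definite matrices $A,B$ of equal determinant, since $\det(e^H\#_t e^K)=\det(e^{(1-t)H+tK})=\det(e^{(1-t)H}e^{tK})=e^{(1-t)\Tr H+t\Tr K}$; thus each relevant pair satisfies $A\prec_{\log}B$ with equal determinant, i.e.\ $\log\lambda(A)\prec\log\lambda(B)$ in the ordinary majorization order. The key lemma is then: if $A\prec_{\log}B$ and $|||A|||=|||B|||$ for a \emph{strictly increasing} unitarily invariant norm, then $\lambda(A)=\lambda(B)$. I would prove it in two moves. First, log-majorization implies weak majorization $\lambda(A)\prec_w\lambda(B)$; choosing a dominating vector $c$ with $\lambda(A)\le c$ and $c\prec\lambda(B)$ and using that a strictly increasing symmetric gauge function is strictly monotone, equality of the norms forces $\lambda(A)=c$, hence $\Tr A=\Tr B$. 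Second, since $x\mapsto\sum_i e^{x_i}$ is \emph{strictly} Schur-convex, the majorization $\log\lambda(A)\prec\log\lambda(B)$ together with $\Tr A=\Tr B$ forces $\log\lambda(A)=\log\lambda(B)$, i.e.\ $\lambda(A)=\lambda(B)$.

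With this reduction, every equality in Theorem~\ref{bigpicture} and in \eqref{theq} becomes the assertion that two explicit positive matrices have identical spectra, in particular identical trace. After the linear substitution $(A,B)=((1-t)H,tK)$ this identifies the Golden--Thompson equality $\Tr e^{A+B}=\Tr e^Ae^B$, and after using the $r=1$ log-majorization behind \eqref{inside} it identifies the Hiai--Petz equality $\Tr[e^A\#_s e^B]=\Tr[e^{(1-s)A+sB}]$; the ranges $t<0$ and $t>1$ appearing in \eqref{th12}, \eqref{th13} and \eqref{theq} are brought back to these through the symmetry relation used in the proof of Theorem~\ref{bigpicture} and the analogous reductions from \cite{Hiai2019}. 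The whole theorem therefore rests on the base equality cases: $\Tr e^{A+B}=\Tr e^Ae^B$ and $\Tr[e^A\#_s e^B]=\Tr[e^{(1-s)A+sB}]$ hold if and only if $A$ and $B$ commute.

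These base cases are the main obstacle, since the reduction above is soft but the base cases are genuinely analytic. For Golden--Thompson I would invoke (or reprove) the known equality characterization; a self-contained route is to note that $p\mapsto\Tr[(e^{pA/2}e^{pB}e^{pA/2})^{1/p}]$ is nondecreasing, tends to $\Tr e^{A+B}$ as $p\to0^+$, and equals $\Tr e^Ae^B$ at $p=1$, so equality forces this function to be constant on $[0,1]$; the second-order expansion at $p=0^+$, controlled by the Kubo--Mori/BKM inner product, then isolates the commutator $[A,B]$ and shows it must vanish. The Hiai--Petz base case can be treated in parallel via monotonicity of $r\mapsto\big|\big|\big|(e^{rA}\#_s e^{rB})^{1/r}\big|\big|\big|$ and the same $r\to0^+$ expansion, or deduced from the Golden--Thompson case. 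I expect the bookkeeping for $t$ outside $[0,1]$—ensuring the switch relation preserves strictness and that the degenerate points $t=0,1$ are correctly excluded—to be the most error-prone part, whereas the strictly-Schur-convex upgrade from equal trace to equal spectrum is the conceptual key that makes strict monotonicity of the norm do precisely the work required.
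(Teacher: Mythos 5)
Your forward direction is fine, your reduction lemma is correct and well argued (log-majorization with equal determinants plus equality in a strictly increasing unitarily invariant norm forces equal spectra, via the strict Schur-convexity of $x\mapsto\sum_i e^{x_i}$), and the observation that $t=0,1$ are degenerate points is a legitimate caveat the paper does not state. The problem is the final reduction to ``base equality cases.'' You claim that the equality case of Equation \ref{theq} for $t\leq 0$ and $t\geq 1$ is ``brought back'' to the Hiai--Petz equality $\Tr[e^A\#_s e^B]=\Tr[e^{(1-s)A+sB}]$ through the symmetry relation \ref{switch}. This cannot work: the map $t\mapsto 1-t$ sends $(-\infty,0]\cup[1,\infty)$ onto itself and never into $[0,1]$, which is exactly why \ref{theq} is a new inequality (indeed the \emph{reverse} of \ref{inside}) rather than a reparametrization of it. So for the genuinely new equality case --- the one the theorem is really about --- your argument supplies no base case at all, and the two base cases you do propose to prove (Golden--Thompson and interior Hiai--Petz) do not cover it.

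The paper closes this gap by re-using the chain of log-majorizations from its proof of Theorem \ref{theorem}: with $p=r/t$,
\begin{equation*}
(e^{rH}\#_t e^{rK})^{1/r}\;\succ_{(\log)}\;\bigl(e^{p(1-t)H/2}\,e^{ptK}\,e^{p(1-t)H/2}\bigr)^{1/p}\;\succ_{(\log)}\;e^{(1-t)H+tK}.
\end{equation*}
Equality of the two endpoint norms pinches the middle term, and the second link is then an equality in Corollary \ref{6.39} for the pair $A=e^{(1-t)H}$, $B=e^{tK}$ at parameter $p$. Since $p\mapsto|||(A^{p/2}B^pA^{p/2})^{1/p}|||$ increases to its Lie--Trotter limit $|||e^{(1-t)H+tK}|||$ as $p\to 0^+$, Hiai's lemma (Lemma \ref{eqcase}) on the strictness of that monotonicity forces $e^{(1-t)H}$ and $e^{tK}$ to commute, hence $H$ and $K$ commute for $t\neq 0,1$. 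This interpolating middle operator is the missing idea in your write-up; without it (or some substitute argument valid on the exterior range) your proof establishes the equality characterization only for \ref{th11} and for the Golden--Thompson link of \ref{th12} and \ref{th13}, not for \ref{theq}. Separately, your treatment of the links between $|||e^{(1-t)H}e^{tK}|||$ and $|||e^H\#_te^K|||$ in \ref{th12} and \ref{th13} is only a pointer to ``analogous reductions,'' where the paper cites the equality analysis of \cite{Hiai2019} explicitly; that part is acceptable as a citation but is not derived from your two base cases either.
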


\p Section 2 introduces the geodesic interpretation of the weighted geometric mean operator, which gives more significance and clarity to its various properties. Section 3 introduces the technique of log majorization and its relevant applications, which instrumental in prior work in this subject and in proving our main theorems. Section 4 contains the proof of Theorems \ref{theorem} and \ref{theqcase}, commentary on its connection to quantum relative entropy, and the interconnectedness of the $t\in[0,1]$, $t\in[-1,0]\cup[1,2]$, and $t\in(-\infty,-1]\cup[2,\infty)$ cases. 

\section{The Weighted Geometric Mean As Geodesics}

\p Let $P_n\subseteq M_n$ denote the space of $n\times n$ positive definite complex matrices. We consider the Riemannian metric with arc length of the smooth path $\gamma: [a,b]\rightarrow P_n$ defined by \begin{equation}
	L(\gamma):=\int_a^b ||\gamma(t)^{-1/2}\gamma'(t)\gamma(t)^{-1/2}||_2 dt,
\end{equation}
where $||\cdot||_2$ denotes the Hilbert-Schmidt norm. Then the corresponding distance is \begin{equation}
	\delta(X,Y)=\inf\left\{\int_0^1||\gamma(t)^{-1/2}\gamma'(t)\gamma(t)^{-1/2}||_2 dt, \;\;\; \gamma(0)=X,\; \gamma(1)=Y \right\}.
\end{equation}

\p This metric, first introduced by Skovgaard \cite{skovgaard} for its applications in statistics, has the particular nice property that it is invariant under conjugation: \begin{equation}
	\delta(A^\ast XA, A^\ast Y A)=\delta(X,Y)	
\end{equation}
for all invertible matrices $A\in M_n$. It is shown in \cite{Carlen2018} \cite{BhatiaHolbrook} that there is a unique constant speed geodesic for any $X,Y\in P_n$ running between $X$ and $Y$ in unit time, namely \begin{equation}
	\gamma(t)=X\#_t Y:=X^{1/2}(X^{-1/2}YX^{-1/2})^tX^{1/2}.
\end{equation}

\p These geodesics satisfy \cite{Carlen2018} for all $t,t_0,t_1\in\R$ and $X,Y\in P_n$  \begin{equation}
	X\#_{(1-t)t_0+tt_1}Y =(X\#_{t_0}Y)\#_t(X\#_{t_1}Y).
\end{equation}
Of particular interest to us will be the $t_0=1$, $t_1=0$, which gives the identity \begin{equation}\label{switch}
	X\#_{1-t}Y=Y\#_t X.
\end{equation}	This will allow us to extend all inequalities proven for $t\in\R^+$ to $(1-t)\in\R^-$, and in particular to connect the positive and negative cases of Theorem \ref{bigpicture}.

\p Ando and Kubo prove in \cite{Kubo1980} that the map \begin{equation}
	(X, Y ) \mapsto X\#_tY 	
\end{equation}
is jointly concave and monotone increasing in X and Y for $t\in[0,1]$. Carlen and Lieb prove \cite{Carlen2018} that it is jointly convex for $t \in [-1, 0] \cup [1, 2]$. However, it is challenging to expand these results as $f(x)=x^t$ is not operator monotone for $t>1$ and neither concave nor convex for $t\notin [-1, 2]$. Therefore, proofs of trace inequalities cannot rely solely on operator monotonicty or convexity, so we turn to our main tool: log majorization. 

\section{Log Majorization}
\p Let $\mathbf{a}=(a_1,\dots, a_n)$ and $\mathbf{b}=(b_1,\dots, b_n)$, with $a_1\geq\dots\geq a_n$ and $b_1\geq\dots\geq b_n$. Then $\mathbf{b}$ weakly majorizes $\mathbf{a}$, written $\mathbf{a}\prec_{w} \bf{b}$, when \begin{equation}
	\sum_{i=1}^ka_i\leq \sum_{i=1}^k b_i, \qquad 1\leq k \leq n
\end{equation}
and is majorized $\bf{a}\prec\bf{b}$ when the final inequality is an equality. Weak log majorization $\bf{a}\prec_{w(\log)}\bf{b}$ is similarly defined for non-negative vectors as \begin{equation}
	\prod_{i=1}^ka_i\leq \prod_{i=1}^k b_i, \qquad 1\leq k \leq n
\end{equation}
with log majorization $\bf{a}\prec_{(\log)}\bf{b}$ when the final inequality is an equality.

\p We define all of the above majorization for matrices, ie $A\prec B$ and all variations, when the singular values in descending order considered as a vector $(s_1(A), \dots, s_n(A )) \prec (s_1(B), \dots, s_n(B))$.

\p We list the following results for convenience: 

\begin{lemma}\label{convlemma}
	Let $A,B\in M_n^+$. Then the following conditions are equivalent: $A\prec_{w(log)}B$, $|||A|||\leq |||B|||$ for every unitarily invariant norm $|||\cdot|||$, and $|||f(A)|||\leq |||f(B)|||$ for every unitarily invariant norm $|||\cdot|||$ and continuous increasing function $f:\R\rightarrow\R$ such that $f(0)\geq 0$ and $f(e^t)$ is convex. 
\end{lemma}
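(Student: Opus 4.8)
The plan is to read the statement as an equivalence between a log‑majorization condition on eigenvalues and a family of norm inequalities, and to pass everything through the eigenvalue vectors. Since $A,B\in M_n^+$ are positive, their singular values coincide with their eigenvalues $\lambda_1(A)\ge\cdots\ge\lambda_n(A)$ and $\lambda_1(B)\ge\cdots\ge\lambda_n(B)$; moreover, for any admissible $f$ the matrices $f(A),f(B)$ are again positive with eigenvalues $f(\lambda_i(A)),f(\lambda_i(B))$, because $f(0)\ge 0$ together with monotonicity forces $f\ge 0$ on the spectrum, so these are genuine singular values. The two structural facts I would invoke are the Fan dominance principle — that $C\prec_w D$ (weak majorization of singular values) holds iff $|||C|||\le|||D|||$ for every unitarily invariant norm — and the classical majorization lemma that if a vector $x$ is weakly majorized by $y$ and $g$ is increasing and convex, then $g(x)$ is weakly majorized by $g(y)$.

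For the direction $A\prec_{w(\log)}B\Rightarrow$ the functional norm inequalities, I would first treat the positive definite case (the semidefinite case following by continuity, perturbing $B$ to $B+\varepsilon I$) and write $\lambda_i(A)=e^{\alpha_i}$, $\lambda_i(B)=e^{\beta_i}$. By definition, weak log majorization of the eigenvalues is exactly weak majorization of the vectors $(\alpha_i)$ and $(\beta_i)$. For $f$ as in the statement, the function $g(s):=f(e^s)$ is increasing (since $g'(s)=f'(e^s)e^s\ge 0$) and convex by hypothesis, so the majorization lemma shows that $(f(\lambda_i(A)))=(g(\alpha_i))$ is weakly majorized by $(g(\beta_i))=(f(\lambda_i(B)))$; Fan dominance then delivers $|||f(A)|||\le|||f(B)|||$. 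Specializing to $f=\mathrm{id}$, which satisfies all the hypotheses, recovers the bare norm inequality.

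The delicate direction is reconstructing the log‑majorization condition from norm data, and this is where I expect the main obstacle. The plain inequality (the case $f=\mathrm{id}$) is by itself too weak to recover weak log majorization — weak majorization of eigenvalues is a strictly weaker relation, as the pair $\mathrm{diag}(2,2)$ and $\mathrm{diag}(3,1)$ illustrates — so the reconstruction must draw on the full family of admissible $f$. My plan is to apply the functional inequality to the power functions $f_p(x)=x^p$ for $p>0$ (each increasing with $f_p(0)=0$ and $f_p(e^t)=e^{pt}$ convex) together with the Ky Fan $k$‑norms $\|\cdot\|_{(k)}$, the sum of the $k$ largest singular values, to obtain $\sum_{i=1}^k\lambda_i(A)^p\le\sum_{i=1}^k\lambda_i(B)^p$ for every $k$ and every $p>0$. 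Dividing the inequality $\log\sum_{i\le k}\lambda_i(A)^p-\log\sum_{i\le k}\lambda_i(B)^p\le 0$ by $p$ and letting $p\to 0^+$, the expansion $\lambda^p=1+p\log\lambda+O(p^2)$ yields in the limit $\sum_{i=1}^k\log\lambda_i(A)\le\sum_{i=1}^k\log\lambda_i(B)$, i.e. $\prod_{i\le k}\lambda_i(A)\le\prod_{i\le k}\lambda_i(B)$, which is precisely $A\prec_{w(\log)}B$. The step demanding the most care is justifying this limit rigorously and arranging the logical cycle so that log‑majorization is genuinely reconstructed from the power‑function family rather than from a single norm.
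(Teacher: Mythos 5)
Your proposal is correct, and since the paper offers no proof of this lemma at all --- it simply defers to the reference \cite{Hiai2014} --- there is no in-paper argument to compare against; what you have written is essentially the standard proof that appears in that reference (Fan dominance, the transfer of weak majorization under increasing convex functions applied to $g(s)=f(e^s)$, and recovery of the log-majorization from the power functions $x^p$ via the $p\to 0^+$ limit of power means). Your limit computation is sound, and the perturbation to handle singular matrices is the usual device. One point you make deserves emphasis rather than burial: your counterexample $\mathrm{diag}(2,2)$ versus $\mathrm{diag}(3,1)$ shows that the second condition ($|||A|||\leq|||B|||$ for every unitarily invariant norm) is by Fan dominance exactly weak majorization, which is strictly weaker than $A\prec_{w(\log)}B$; so the three-way ``equivalence'' as literally stated in the lemma is an overstatement, and what is actually true (and what you prove) is that the first and third conditions are equivalent and each implies the second. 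This imprecision is harmless for the paper, which only ever invokes the implications out of $A\prec_{w(\log)}B$, but it does mean the proof of Corollary \ref{flog} should not be read as deducing $f(A)\prec_{w(\log)}f(B)$ from the norm inequalities alone; for the multiplicative $f$ considered there one gets it directly by applying $f$ to the partial products. The only step in your write-up that could use one more line is the degenerate case in the reverse direction where some $\lambda_i(B)=0$ among the top $k$ while the corresponding $\lambda_i(A)>0$: there the quotient of the two sums does not tend to $1$, but the same $p\to 0^+$ analysis shows the hypothesis already forces a contradiction, so the case cannot occur.
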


\p A full proof can be found in \cite{Hiai2014}. 
\begin{corollary}\label{flog}
	When $A\prec_{(\log)}B$, and for a function $f$ that satisfies the hypotheses of Lemma \ref{convlemma} such that $f(x_1x_2)=f(x_1)f(x_2)$, then $f(A)\prec_{(\log)}f(B)$.
\end{corollary}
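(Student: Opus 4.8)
The plan is to derive Corollary \ref{flog} directly from Lemma \ref{convlemma} by upgrading weak log majorization to genuine (full) log majorization, using the extra multiplicativity hypothesis $f(x_1 x_2) = f(x_1) f(x_2)$ together with the equality in the top-index product that distinguishes $\prec_{(\log)}$ from $\prec_{w(\log)}$. First I would record what the two hypotheses give us. Since $A \prec_{(\log)} B$ means precisely that $\prod_{i=1}^k s_i(A) \le \prod_{i=1}^k s_i(B)$ for all $1 \le k \le n$ with equality at $k = n$, the weak log majorization part $A \prec_{w(\log)} B$ holds a fortiori; Lemma \ref{convlemma} then applies and tells us, for any $f$ satisfying its hypotheses, that $s_i(f(A))$ is weakly log majorized by $s_i(f(B))$ — that is, $f(A) \prec_{w(\log)} f(B)$. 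Here I am using that $f$ is increasing and $f(0) \ge 0$ so that $f$ applied to the positive spectra of $A$ and $B$ preserves the ordering of singular values, making $s_i(f(A)) = f(s_i(A))$.

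The only thing left to promote $\prec_{w(\log)}$ to $\prec_{(\log)}$ for $f(A)$ and $f(B)$ is the equality case at the full product $k = n$. This is exactly where multiplicativity enters. I would compute
\begin{equation}
	\prod_{i=1}^n s_i(f(A)) = \prod_{i=1}^n f(s_i(A)) = f\!\left(\prod_{i=1}^n s_i(A)\right),
\end{equation}
where the second equality is the iterated application of $f(x_1 x_2) = f(x_1) f(x_2)$. By the full-product equality coming from $A \prec_{(\log)} B$ we have $\prod_{i=1}^n s_i(A) = \prod_{i=1}^n s_i(B)$, and applying $f$ to both sides and running the multiplicativity identity in reverse gives $\prod_{i=1}^n s_i(f(A)) = \prod_{i=1}^n s_i(f(B))$. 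Combined with the weak log majorization inequalities for every $k < n$ already obtained from Lemma \ref{convlemma}, this yields $f(A) \prec_{(\log)} f(B)$, which is the claim.

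The step I expect to require the most care is the identification $s_i(f(A)) = f(s_i(A))$ and the attendant bookkeeping of which hypotheses on $f$ are actually invoked. I am implicitly using that $A, B \in M_n^+$ so their singular values coincide with their (nonnegative) eigenvalues and that $f$, being increasing with $f(0) \ge 0$, maps these to a decreasingly ordered nonnegative sequence without reordering; the multiplicativity is used only at the final index, while the monotonicity and $f(e^t)$-convexity are what drive the intermediate weak inequalities through Lemma \ref{convlemma}. The substantive point is simply that the equality at $k=n$ propagates through $f$ precisely because $f$ is multiplicative, so no genuine obstacle arises beyond carefully chaining the definitions; the proof is short once Lemma \ref{convlemma} is in hand.
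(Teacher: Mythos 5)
Your proposal is correct and follows essentially the same route as the paper: invoke Lemma \ref{convlemma} to obtain $f(A)\prec_{w(\log)}f(B)$, then use the multiplicativity $f(x_1x_2)=f(x_1)f(x_2)$ to carry the full-product equality $\prod_i s_i(A)=\prod_i s_i(B)$ over to $\prod_i s_i(f(A))=\prod_i s_i(f(B))$. Your extra care about $s_i(f(A))=f(s_i(A))$ is a reasonable elaboration of a step the paper leaves implicit, but the argument is the same.
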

\begin{proof}
	Lemma 1 implies $f(A)\prec_{w(\log)}f(B)$, then $\prod_{i=1}^n \lambda(A)=\prod_{i=1}^n \lambda(B)$ gives $\prod_{i=1}^n f(\lambda(A))=f(\prod_{i=1}^n \lambda(A))=f(\prod_{i=1}^n \lambda(B))=\prod_{i=1}^n f(\lambda(B))$. 
\end{proof}

\begin{theorem}\label{gtlog} (Araki \cite{Araki1990}, Theorem 1)
	For all $A,B\in P_n$,	 \begin{equation}
		(A^{1/2} B A^{1/2})^r \prec_{(\log)} A^{r/2} B^r A^{r/2}, \qquad (r \geq 1)	 
	\end{equation}
	or equivalently \begin{equation}
		(A^{p/2}B^p A^{p/2})^{1/p} \prec_{(\log)}  (A^{q/2}B^q A^{q/2})^{1/q}, \qquad (0 < p \leq q).	
	\end{equation}
\end{theorem}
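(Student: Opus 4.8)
The plan is to deduce the log-majorization from the single top-eigenvalue (operator-norm) inequality by passing to antisymmetric tensor powers, and then to prove that scalar-looking norm inequality using only operator monotonicity of $x\mapsto x^s$.

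First I would record the structural reduction. For $M\in M_n$ let $\wedge^k M$ denote the induced operator on the $k$-fold antisymmetric tensor product $\wedge^k\C^n$. This assignment is multiplicative, $\wedge^k(MN)=(\wedge^kM)(\wedge^kN)$, respects adjoints, sends positive operators to positive operators, and commutes with continuous functional calculus (it acts diagonally on the induced eigenbasis), so that $\wedge^k(M^\alpha)=(\wedge^kM)^\alpha$ for positive $M$ and $\alpha>0$; moreover its largest singular value is the product of the top $k$ singular values of $M$, namely $s_1(\wedge^kM)=\prod_{i=1}^k s_i(M)$. Applying these identities with $\tilde A=\wedge^kA$ and $\tilde B=\wedge^kB$ (again positive definite) converts the partial-product inequality $\prod_{i=1}^k s_i((A^{1/2}BA^{1/2})^r)\le\prod_{i=1}^k s_i(A^{r/2}B^rA^{r/2})$ into exactly the $k=1$ statement $s_1((\tilde A^{1/2}\tilde B\tilde A^{1/2})^r)\le s_1(\tilde A^{r/2}\tilde B^r\tilde A^{r/2})$. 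Hence it suffices to prove, for all positive definite $A,B$, the operator-norm inequality $\|(A^{1/2}BA^{1/2})^r\|\le\|A^{r/2}B^rA^{r/2}\|$, where $\|\cdot\|$ denotes the operator norm $s_1(\cdot)$.

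The heart of the argument is this norm inequality. Since both matrices are positive their norms equal their largest eigenvalues, and $\lambda_{\max}(A^{1/2}BA^{1/2})=\rho(AB)$, the spectral radius; likewise $\|A^{r/2}B^rA^{r/2}\|=\rho(A^rB^r)$. Thus the claim reads $\rho(AB)^r\le\rho(A^rB^r)$, and writing $s=1/r\in(0,1]$ and relabeling $A^r,B^r$ as $A,B$ it becomes $\rho(A^sB^s)\le\rho(AB)^s$ for $0<s\le1$. I would prove this by a conjugation argument: set $c=\rho(AB)=\|A^{1/2}BA^{1/2}\|$, so that $A^{1/2}BA^{1/2}\le cI$ in the positive semidefinite order; conjugating by $A^{-1/2}$ gives $B\le cA^{-1}$. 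Because $x\mapsto x^s$ is operator monotone for $0<s\le1$ (Löwner), this yields $B^s\le(cA^{-1})^s=c^sA^{-s}$, and conjugating by $A^{s/2}$ gives $A^{s/2}B^sA^{s/2}\le c^sI$. Taking norms, $\rho(A^sB^s)=\|A^{s/2}B^sA^{s/2}\|\le c^s=\rho(AB)^s$. The two steps together give weak log-majorization $\prec_{w(\log)}$; to upgrade it to genuine log-majorization I would check equality of the full products via determinants, $\det((A^{1/2}BA^{1/2})^r)=(\det A\,\det B)^r=\det(A^{r/2}B^rA^{r/2})$, so that the $k=n$ inequality is an equality. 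Finally the two displayed forms are equivalent: substituting $A\mapsto A^p$, $B\mapsto B^p$ with $r=q/p\ge1$ turns the first into $(A^{p/2}B^pA^{p/2})^{q/p}\prec_{(\log)}A^{q/2}B^qA^{q/2}$, and applying the exponent $1/q$ (which preserves log-majorization, as it merely rescales the logarithms of the singular values) produces the second.

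I expect the main obstacle to be conceptual rather than computational: the essential idea is recognizing that the partial products $\prod_{i=1}^k s_i$ defining log-majorization are \emph{linearized} into a single operator norm on the antisymmetric tensor power $\wedge^k$, which collapses the whole family of $k$ inequalities to the single $k=1$ case. Once that reduction is secured, operator monotonicity of $x^s$ settles the norm inequality in a few lines. The one point requiring care is verifying that $\wedge^k$ genuinely commutes with the fractional powers and the square roots through the spectral decomposition, since it is this exactness—not merely an inequality—that makes the passage from the $k$-th partial product to the $k=1$ operator norm valid.
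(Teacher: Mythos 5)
The paper does not prove this statement at all: it is quoted verbatim from Araki \cite{Araki1990} (and restated in the form used by Ando--Hiai), so there is no in-paper argument to compare yours against. That said, your proof is correct and complete. The reduction via $s_1(\wedge^k M)=\prod_{i=1}^k s_i(M)$, multiplicativity of $\wedge^k$, and its compatibility with fractional powers of positive operators is exactly the right way to collapse the $k$-th partial product to the $k=1$ operator-norm statement; the identification $\|(A^{1/2}BA^{1/2})^r\|=\rho(AB)^r$ and $\|A^{r/2}B^rA^{r/2}\|=\rho(A^rB^r)$ via similarity is valid; and the core step $B\le cA^{-1}\Rightarrow B^s\le c^sA^{-s}$ by L\"owner's theorem, followed by conjugation with $A^{s/2}$, gives $\rho(A^sB^s)\le\rho(AB)^s$ for $0<s\le1$ as claimed. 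The determinant identity upgrades weak log-majorization to log-majorization, and the substitution $A\mapsto A^p$, $B\mapsto B^p$, $r=q/p$ correctly yields the second displayed form. This is the standard modern proof of Araki's inequality (essentially the Ando--Hiai presentation; Araki's original 1990 argument proceeds somewhat differently, through trace inequalities in a more general operator-algebraic setting), and it is fully consistent with the antisymmetric-tensor-power technique the paper itself deploys in the proof of Theorem \ref{mylog1}.
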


\begin{corollary}\label{6.39}
	For all self-adjoint $H,K\in M_n$ and unitarily invariant norm  $|||\cdot |||$, \begin{equation}
		\big|\big|\big|e^{H+K} \big|\big|\big| \leq \big|\big|\big|(e^{q H/2}e^{q K} e^{q H/2})^{1/q} \big|\big|\big|, \qquad\qquad\qquad q > 0,
	\end{equation}
	or equivalently \begin{equation}
		e^{H+K} \prec_{(\log)}(e^{q H/2}e^{q K} e^{q H/2})^{1/q}, \;\;\;\qquad\qquad\qquad\qquad q > 0,
	\end{equation}
\end{corollary}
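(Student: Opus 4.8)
The plan is to obtain the statement as a limiting case of Theorem \ref{gtlog} (Araki's log majorization) by letting the smaller exponent tend to zero and invoking the Lie--Trotter product formula, and then to convert the resulting log majorization into the norm inequality via Lemma \ref{convlemma}.

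First I would substitute $A=e^H$ and $B=e^K$ into the second (equivalent) form of Theorem \ref{gtlog}. Since $H$ and $K$ are self-adjoint, $A$ and $B$ are positive definite, and $A^p=e^{pH}$, $B^p=e^{pK}$, so for any $0<p\le q$ the theorem gives
\begin{equation}
	(e^{pH/2}e^{pK}e^{pH/2})^{1/p}\prec_{(\log)}(e^{qH/2}e^{qK}e^{qH/2})^{1/q}.
\end{equation}
Next I would let $p\to 0^+$ on the left-hand side. By the symmetric Lie--Trotter product formula (with $n=1/p$),
\begin{equation}
	\lim_{p\to 0^+}(e^{pH/2}e^{pK}e^{pH/2})^{1/p}=\lim_{n\to\infty}\left(e^{H/(2n)}e^{K/n}e^{H/(2n)}\right)^n=e^{H+K},
\end{equation}
with convergence in norm, while the right-hand side does not depend on $p$.

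Log majorization of positive definite matrices is a finite collection of inequalities on the products $\prod_{i=1}^k\lambda_i(\cdot)$ of the top $k$ eigenvalues, together with equality of the full products (equivalently, of the determinants). Since the ordered eigenvalues and the determinant are continuous functions of the matrix, each weak inequality and the terminal equality is preserved under the limit, yielding
\begin{equation}
	e^{H+K}\prec_{(\log)}(e^{qH/2}e^{qK}e^{qH/2})^{1/q},\qquad q>0,
\end{equation}
which is the second assertion. Finally, full log majorization implies weak log majorization, so Lemma \ref{convlemma} gives $|||e^{H+K}|||\le|||(e^{qH/2}e^{qK}e^{qH/2})^{1/q}|||$ for every unitarily invariant norm, the first assertion.

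The main obstacle is the justification that the log majorization relation survives the passage to the limit $p\to 0^+$; this rests on the continuity of the eigenvalue products and of the determinant, the latter of which also confirms that the limiting relation is full (not merely weak) log majorization. As a consistency check one computes directly that $\det\big[(e^{pH/2}e^{pK}e^{pH/2})^{1/p}\big]=e^{\Tr H+\Tr K}=\det e^{H+K}$ for every $p>0$, so the determinant constraint is in fact exact along the entire family and passes trivially to the limit.
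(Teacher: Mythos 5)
Your proposal is correct and follows exactly the paper's own route: the paper proves this corollary by taking $p\to 0$ in the second form of Theorem \ref{gtlog} with $A=e^H$, $B=e^K$ and applying the Lie--Trotter product formula, then passing to norms via Lemma \ref{convlemma}. Your additional care in justifying that log majorization survives the limit (continuity of eigenvalue products and the exact determinant identity along the whole family) is a sound elaboration of the step the paper leaves implicit.
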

\begin{proof}
	Take $p\rightarrow 0$ in Theorem \ref{gtlog}  and apply the Lie-Trotter product formula. 
\end{proof}

\section{Trace Inequalities}

\begin{proof}[Proof of Theorem \ref{theorem}]
	Let $t\geq 1$, and $r>0$. For $f(x)=x^{1/r}$, then $f(0)=0$, $f(e^t)$ is convex and increasing, and $f(x_1x_2)=f(x_1)f(x_2)$. Then applying Corollary \ref{flog} and Theorem \ref{gtlog}, \begin{align}
		(e^{rH}\#_t e^{rK})^{1/r}&=f(e^{rH/2}(e^{-rH/2}e^{rK}e^{-rH/2})^te^{rH/2}) \\
		&\succ_{(\log)} f\left((e^{rH/2t}(e^{-rH/2}e^{rK}e^{-rH/2})e^{rH/2t})^t\right) \\
		&=(e^{r(1-t)H/2t}e^{rK}e^{r(1-t)H/2t})^{t/r}.
	\end{align}
	We explicitly write\begin{equation}
		e^{r(1-t)H/2t}e^{rK}e^{r(1-t)H/2t}=e^{r(1-t)H/2t}e^{trK/t}e^{r(1-t)H/2t}
	\end{equation}As $t/r>0$, we apply Corollary \ref{6.39}: \begin{equation}
		(e^{r(1-t)H/2t}e^{trK/t}e^{r(1-t)H/2t})^{t/r} \succ_{(\log)} e^{(1-t)H+tK}.
	\end{equation} 
	Then by Lemma 1,\begin{equation}\label{theq2}
		|||(e^{rH}\#_t e^{rK})^{1/r}|||\geq |||e^{(1-t)H+tK}|||
	\end{equation}
	for any unitarily invariant norm. Relationship \ref{switch} extends this to $t\leq 0$. 
\end{proof}

\begin{proof}[Proof of Theorem \ref{theqcase}]
	When $H$ and $K$ commute, all of the expressions being considered are equal, and hence their norms will be equal. Therefore, it remains to consider the implications of equality for strictly increasing unitarily invariant norms.
	
\p 	It is known from Hiai \cite{GTEqualityCases} \textit{(Theorem 3.1)} that there is equality in the inequalities in Equation \ref{th11} for a strictly increasing unitarily invariant norm if and only if $H$ and $K$ commute, and in \cite{Hiai2019} for all the relationships between $|||(e^{rH}\#_te^{rK})^{1/t}|||$ and $|||e^{(1-t)H}e^{tK}|||$.  Therefore, it remains to prove that there is equality in Equation \ref{theq} (re-stated above in Equation \ref{theq2}) if and only if $H$ and $K$ commute; then Relationship \ref{switch} extends the characterization to all $t$, taking $r=1$.
	
\p 	We use the following Lemma from Hiai \cite{GTEqualityCases}:
	
	\begin{lemma}\label{eqcase} (Hiai 1994) Let $A,B\in M_n^+$. Then $|||(A^{p/2}B^pA^{p/2})^{1/p}|||$ is not strictly increasing in $p>0$ if and only if $A$ and $B$ commute.
	\end{lemma}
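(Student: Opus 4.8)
The plan is to prove both implications, with the forward direction (non-monotonicity $\Rightarrow$ commutativity) carrying essentially all of the content. Write $F(p):=(A^{p/2}B^pA^{p/2})^{1/p}$, a positive definite matrix for each $p>0$, and assume $A,B$ are positive definite. The easy direction is immediate: if $A$ and $B$ commute then $A^{p/2}B^pA^{p/2}=A^pB^p=(AB)^p$, so $F(p)=AB$ is independent of $p$, whence $|||F(p)|||$ is constant and in particular not strictly increasing. For the converse I would first record that $F$ is monotone in the norm: the second form of Theorem \ref{gtlog} gives $F(p)\prec_{(\log)}F(q)$ whenever $0<p\le q$, so Lemma \ref{convlemma} yields $|||F(p)|||\le|||F(q)|||$ for every unitarily invariant norm. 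Thus $p\mapsto|||F(p)|||$ is non-decreasing, and ``not strictly increasing'' means precisely that there exist $0<p_0<q_0$ with $|||F(p_0)|||=|||F(q_0)|||$; by monotonicity the norm is then constant on the whole interval $[p_0,q_0]$.

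The heart of the argument is a spectral rigidity statement: for a \emph{strictly increasing} norm, equality along the log-majorization chain forces equality of spectra. Concretely I would prove that if $X\prec_{(\log)}Y$ for positive definite $X,Y$ and $|||X|||=|||Y|||$ for a strictly increasing norm, then $X$ and $Y$ have the same eigenvalues. Writing $u=\log\lambda(X)$ and $v=\log\lambda(Y)$ (eigenvalues in decreasing order), $X\prec_{(\log)}Y$ says exactly that $u\prec v$ in the sense of ordinary majorization, so by Birkhoff's theorem $u=Dv$ for a doubly stochastic $D$. Convexity of $\exp$ then gives $\lambda_i(X)=e^{u_i}=e^{\sum_jD_{ij}v_j}\le\sum_jD_{ij}e^{v_j}=(D\lambda(Y))_i$ entrywise, while $D\lambda(Y)\prec\lambda(Y)$ gives $\Phi(D\lambda(Y))\le\Phi(\lambda(Y))$ at the level of the gauge function $\Phi$ of the norm. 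Since $\Phi(\lambda(X))=|||X|||=|||Y|||=\Phi(\lambda(Y))$, the chain $\Phi(\lambda(X))\le\Phi(D\lambda(Y))\le\Phi(\lambda(Y))=\Phi(\lambda(X))$ collapses: strict increasingness applied to the entrywise inequality $\lambda(X)\le D\lambda(Y)$ forces $\lambda(X)=D\lambda(Y)$, and then equality in Jensen together with the strict convexity of $\exp$ forces $v$ to be constant on the support of each row of $D$. This pins every $\lambda_i(X)$ to an actual eigenvalue of $Y$, which with $\lambda(X)\prec\lambda(Y)$ forces $\lambda(X)=\lambda(Y)$. Applying this to $X=F(p)$, $Y=F(q)$ for $p_0\le p\le q\le q_0$ shows the spectrum of $F(p)$ is constant on $[p_0,q_0]$. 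I expect this rigidity step to be the main obstacle, as it is exactly where strict increasingness is indispensable: the operator norm and the lower Ky Fan norms are constant along nontrivial log-majorization chains, so no such conclusion can hold for them.

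To finish, I would upgrade ``constant spectrum on $[p_0,q_0]$'' to ``constant spectrum on all of $(0,\infty)$'' by analyticity. Since $A^{p/2}=e^{(p/2)\log A}$ and $B^p=e^{p\log B}$ depend real-analytically on $p$, the positive definite matrix $A^{p/2}B^pA^{p/2}$ and hence $F(p)=\exp\!\big(p^{-1}\log(A^{p/2}B^pA^{p/2})\big)$ are real-analytic in $p$ on $(0,\infty)$; therefore each elementary symmetric function of the eigenvalues, $\Tr[\wedge^kF(p)]$, is a real-analytic function of $p$ that is constant on $[p_0,q_0]$ and so constant on all of $(0,\infty)$. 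Hence $F(p)$ is isospectral for every $p>0$. Letting $p\to0^+$ and invoking the Lie--Trotter product formula identifies this common spectrum with that of $e^{\log A+\log B}$, while evaluating at $p=1$ and using $\lambda(A^{1/2}BA^{1/2})=\lambda(AB)$ gives $\Tr[e^{\log A+\log B}]=\Tr[AB]=\Tr[e^{\log A}e^{\log B}]$. This is equality in the Golden--Thompson trace inequality, whose equality case forces $\log A$ and $\log B$, and hence $A$ and $B$, to commute \cite{GTEqualityCases}, completing the proof. Alternatively one can reach the same conclusion by computing the $p\to\infty$ limit of $\prod_{i\le k}\lambda_i(F(p))$ and invoking the equality case of the Ky Fan eigenvalue-sum inequality for all $k$, though the degenerate cases of that limit make the Golden--Thompson route cleaner.
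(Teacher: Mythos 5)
The paper offers no proof of this lemma at all: it is imported verbatim from Hiai \cite{GTEqualityCases} as a black box, so there is nothing internal to compare against and your reconstruction must be judged on its own merits. It is essentially correct, and the architecture is a legitimate, self-contained route to the result: (i) Araki's theorem plus Lemma \ref{convlemma} makes $p\mapsto|||F(p)|||$ non-decreasing, so failure of strict monotonicity means constancy on an interval $[p_0,q_0]$; (ii) a rigidity lemma showing that for a strictly increasing unitarily invariant norm, $X\prec_{(\log)}Y$ together with $|||X|||=|||Y|||$ forces $\lambda(X)=\lambda(Y)$; (iii) real-analyticity of $p\mapsto\Tr[\wedge^k F(p)]$ to propagate isospectrality from $[p_0,q_0]$ to all of $(0,\infty)$; (iv) comparison of the $p\to0^+$ Lie--Trotter limit with $p=1$ to land on the equality case of the trace Golden--Thompson inequality. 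This is close in spirit to, but not a transcription of, Hiai's original argument, and the rigidity step (ii) is a genuinely reusable observation.

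A few points need tightening. First, the representation $u=Dv$ with $D$ doubly stochastic for $u\prec v$ is Hardy--Littlewood--P\'olya, not Birkhoff. Second, and more substantively, your closing sentence of the rigidity step --- that pinning each $\lambda_i(X)$ to an eigenvalue of $Y$ ``with $\lambda(X)\prec\lambda(Y)$ forces $\lambda(X)=\lambda(Y)$'' --- is not by itself a proof: $u=(0,0,0)$ and $v=(2,0,-2)$ satisfy both conditions without being rearrangements of one another. The facts you have already established do suffice, but you must use them: Jensen equality shows each row of $D$ is supported on a level set of $v$, and then the column sums of $D$ being $1$ force the number of rows assigned to a given level to equal that level's multiplicity in $v$, so $u$ is a permutation of $v$. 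Spell that out. Third, you assume $A,B$ invertible (needed for $\log A$, $\log B$, and Lie--Trotter), so you prove the statement on $P_n$ rather than all of $M_n^+$; this is harmless for the paper's application, where $A=e^{rH}$ and $B=e^{rK}$. Finally, quoting the Golden--Thompson trace equality case from \cite{GTEqualityCases} itself risks circularity, since Hiai derives that equality case from results adjacent to this very lemma; cite the independent proof (So, 1992) instead.
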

	
\p 	From the proof of Theorem \ref{theorem}, we have \begin{equation}
		(e^{rH}\#_t e^{rK})^{1/r}\succ_{{\log}}	(e^{r(1-t)H/2t}e^{trK/t}e^{r(1-t)H/2t})^{t/r} \succ_{(\log)} e^{(1-t)H+tK}.
	\end{equation}
	Then for a strictly increasing unitarily invariant norm, it follows that we have \begin{equation}\label{ueq}
		\big|\big|\big|(e^{rH}\#_t e^{rK})^{1/r}\big|\big|\big|=\big|\big|\big|(e^{r(1-t)H/2t}e^{trK/t}e^{r(1-t)H/2t})^{t/r}\big|\big|\big| = \big|\big|\big|e^{(1-t)H+tK}\big|\big|\big|.
	\end{equation}
	We then apply Lemma \ref{eqcase} to the second equality considering the Lie-Trotter product formula. 
\end{proof}

\p It is interesting to note that Equation \ref{theq}, the $t\in (-\infty, 0]\cup [1, \infty)$ case actually implies Equation \ref{inside}, the $t\in[0,1]$ case. In \cite{HiaiPetz}, Hiai and Petz show that Equation \ref{inside} is equivalent to  \begin{equation}\label{dereq}
	\frac{1}{r}\Tr\left[e^H\log(e^{rH/2}e^{-rK}e^{rH/2}) \right]\geq \Tr\left[e^{H}(H-K) \right].
\end{equation}
for all $r>0, t\in[0, 1]$. The forward implication comes from noting the inequality trivially becomes equality $\Tr[e^{H}]=\Tr[e^{H}]$ at $t=0$, so taking the limit from above one can relate the derivatives of each side. The backwards inequality is far more involved. Hiai and Petz prove Equation \ref{inside} by proving Equation \ref{dereq} then making use of their equivalence.

\p The exact same reasoning behind the forward implication applies substituting in knowledge of the trace inequality for $t\leq0$. Therefore, it is an immediate consequence of \cite{HiaiPetz} that Equation \ref{theq} implies Equation \ref{inside}: taking the derivative of Equation \ref{theq} at $t=0$ with some changes of variable produces Equation \ref{dereq} directly from Equation \ref{theq}.

\p As mentioned in Section 1, Hiai \cite{Hiai2019} recently published work regarding log majorization and R\'{e}nyi divergences, including that for $t\geq 1$,
\begin{align}
	e^{(1-t)H/2}e^{tK}e^{(1-t)H/2}  \prec_{(\log)}(e^{rH}\#_t e^{rK})^{1/r}  \; \;\;\; r\geq \max\{t/2,t-1 \}\label{lthH1} \\
	(e^{rH}\#_t e^{rK})^{1/r}   \prec_{(\log)} e^{(1-t)H/2}e^{tK}e^{(1-t)H/2} 	\;\;\;\; r\leq \min\{t/2,t-1 \}.\label{lthH2}
\end{align}

\p It then follows from Lemma \ref{convlemma} that for $t\geq 1$,
\begin{align}
	\big|\big|\big|e^{(1-t)H/2}e^{tK}e^{(1-t)H/2}\big|\big|\big| \leq \big|\big|\big|(e^{rH}\#_t e^{rK})^{1/r}\big|\big|\big|  \;\;\;\; r\geq \max\{t/2,t-1 \}\label{thH1} \\
	\big|\big|\big|(e^{rH}\#_t e^{rK})^{1/r}\big|\big|\big|   \leq \big|\big|\big|e^{(1-t)H/2}e^{tK}e^{(1-t)H/2} \big|\big|\big|	\;\; \;\; r\leq \min\{t/2,t-1 \}.\label{thH2}
\end{align}
for any unitarily invariant norm $|||\cdot|||$.

\p It is possible to prove a simplified version of Equation \ref{lthH1} (the special case of r=1) using similar methods as in \cite{Hiai2019}, with the Furuta inequality:

\begin{theorem}\label{mylog1}
	Let $H,K\in M_n$ be Hermitian, and $1\leq t\leq 2$. Then \begin{equation}
		e^{(1-t)H/2}e^{tK}e^{(1-t)H/2}\prec_{(\text{log})} e^{H}\#_t e^{K}.	
	\end{equation}
\end{theorem}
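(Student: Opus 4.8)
The plan is to reduce the log-majorization to a single operator inequality and then extract that inequality from the Furuta inequality. First I would pass to antisymmetric tensor powers: for positive definite $X$ one has $\prod_{i=1}^{k}\lambda_i(X)=\lambda_{\max}(\wedge^{k}X)$, and $\wedge^{k}$ is multiplicative with $\wedge^{k}(X^{p})=(\wedge^{k}X)^{p}$ for positive $X$, so that $\wedge^{k}(A^{(1-t)/2}B^{t}A^{(1-t)/2})=(\wedge^{k}A)^{(1-t)/2}(\wedge^{k}B)^{t}(\wedge^{k}A)^{(1-t)/2}$ and $\wedge^{k}(A\#_t B)=(\wedge^{k}A)\#_t(\wedge^{k}B)$, where $A=e^{H}$ and $B=e^{K}$. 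Hence it suffices to prove, for \emph{all} positive definite $A,B$, the top-eigenvalue inequality $\lambda_{\max}(A^{(1-t)/2}B^{t}A^{(1-t)/2})\le\lambda_{\max}(A\#_t B)$; applying it to $\wedge^{k}A,\wedge^{k}B$ then yields every partial product. A direct computation gives $\det(A^{(1-t)/2}B^{t}A^{(1-t)/2})=\det(A)^{1-t}\det(B)^{t}=\det(A\#_t B)$, so the $k=n$ product is an equality and the relation is genuine log-majorization rather than merely the weak form.

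Second, both sides are homogeneous of the same degree under the independent rescalings $A\mapsto\lambda A$ and $B\mapsto\mu B$ (each side scales by $\lambda^{1-t}\mu^{t}$), so after rescaling $B$ I may assume $\lambda_{\max}(A\#_t B)=1$, i.e. $A\#_t B\le I$. Writing $C:=A^{-1/2}BA^{-1/2}$, the normalized hypothesis reads $C^{t}\le A^{-1}$, while the desired conclusion $A^{(1-t)/2}B^{t}A^{(1-t)/2}\le I$ becomes, after conjugating by $A^{(t-1)/2}$, the operator inequality $(A^{1/2}CA^{1/2})^{t}\le A^{t-1}$. Thus the whole statement reduces to the implication
\[
 C^{t}\le A^{-1}\ \Longrightarrow\ (A^{1/2}CA^{1/2})^{t}\le A^{t-1},\qquad 1\le t\le 2 .
\]

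Third, this implication is exactly where the Furuta inequality enters, and it is the main obstacle. The difficulty is that $x\mapsto x^{t}$ is not operator monotone for $t>1$: weakening the hypothesis to $C\le A^{-1/t}$ (all that operator monotonicity of the $1/t$-power supplies) and then raising to the $t$-th power fails, since $C\le A^{-1/t}$ does not give $(A^{1/2}CA^{1/2})^{t}\le A^{t-1}$. This contrasts with the lower bound of Theorem \ref{theorem}, whose proof applied Araki's log-majorization (Theorem \ref{gtlog}) with aligned exponents $p=1/t\le 1=q$; here the exponents are misaligned and Araki alone is insufficient, so one must invoke the stronger Furuta inequality, which is precisely designed to repair the failure of $X\le Y\Rightarrow X^{t}\le Y^{t}$ for $t>1$. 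I expect the admissible parameter range of the Furuta inequality to correspond exactly to $1\le t\le 2$, matching the condition $r=1\ge\max\{t/2,t-1\}$ of Equation \ref{lthH1}; the bookkeeping of the Furuta exponents, and in particular arranging the sandwich $A^{1/2}CA^{1/2}$ with positive outer $A$-powers bounded above by a positive power of $A$ (which may require the generalized ``grand'' form of the Furuta inequality), is the delicate step.

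Finally, I would record the two endpoints both as consistency checks and as the tight extremes of the range: at $t=1$ both operators equal $B$, and at $t=2$ one has $A\#_2 B=BA^{-1}B$, whose spectrum coincides with that of $A^{-1/2}B^{2}A^{-1/2}$, so the log-majorization holds there with equality throughout. Relationship \ref{switch} is not needed for this statement since $1\le t\le 2$, though it is what transports the resulting inequality to the complementary ranges of $t$ elsewhere in the paper.
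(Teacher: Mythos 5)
Your reduction is correct and matches the paper's: the antisymmetric tensor power argument, the determinant identity, and the normalization $A\#_t B\le I$ all appear in the paper's proof, and your reformulation of the target as $B^t\le A^{t-1}$, i.e.\ $(A^{1/2}CA^{1/2})^t\le A^{t-1}$ given $C^t\le A^{-1}$ with $C=A^{-1/2}BA^{-1/2}$, is exactly the operator inequality the paper establishes. But the proof stops precisely where the actual work begins: you write that ``the bookkeeping of the Furuta exponents\dots is the delicate step'' and never carry it out. This is not a routine verification that can be waved through, because the implication as you have framed it does \emph{not} fit the Furuta template directly. Furuta's inequality, from $X\ge Y\ge 0$, bounds $(X^rY^pX^r)^{1/q}$ by a power of $X$ --- the larger operator must appear as the \emph{outer} sandwich with positive exponent. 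In your setup the hypothesis is $A^{-1}\ge C^t$, but the quantity you need to control is $(A^{1/2}CA^{1/2})^t$, in which $C$ is sandwiched by $A^{+1/2}$ rather than by a nonnegative power of $A^{-1}$; taking $r=-1/2$ is not admissible.

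The missing idea, which the paper supplies, is an intermediate re-centering via the identity $(BAB^\ast)^r=BA^{1/2}(A^{1/2}B^\ast BA^{1/2})^{r-1}A^{1/2}B^\ast$: starting from $(e^{-H/2}e^Ke^{-H/2})^t\le e^{-H}$ this converts the hypothesis into $(e^{K/2}e^{-H}e^{K/2})^{t-1}\le e^{-K}$, so that the roles of $H$ and $K$ are exchanged and the larger operator $e^{-K}$ now sits outside with a positive power. Only then does the standard Furuta inequality apply, with $X=e^{-K}$, $Y=(e^{K/2}e^{-H}e^{K/2})^{t-1}$, $r=\tfrac12$, $p=q=(t-1)^{-1}$; the constraint $(1+2r)q\ge p+2r$ reduces to $\tfrac{1}{t-1}-1\ge 0$, which is where the restriction $t\le 2$ actually enters (your guess that the range $1\le t\le 2$ would fall out of the Furuta parameters is right, but you need this computation to see it). The conclusion $e^{-(t-1)H}\le e^{-tK}$ then inverts to $e^{tK}\le e^{(t-1)H}$, which is your target $B^t\le A^{t-1}$. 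Without the re-centering identity and the explicit parameter choice, the proposal identifies the correct strategy but omits its only nontrivial step.
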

\begin{proof} We use the standard antisymmetric tensor power technique to prove log majorization, noting that for $A \in M_n$ and $k = 1,\dots,n,$ then \begin{equation}
		\prod_{i=1}^k s_i(A)=s_1(A^{\wedge k})=||A^{\wedge k}||,
	\end{equation}
	and that the antisymmetric tensor product has the properties $(A^{\wedge k})^\ast=(A^\ast)^{\wedge k}$, $(AB)^{\wedge k}=A^{\wedge k}B^{\wedge k}$, and for all $A\geq 0$ and $r>0$, $(A^r)^{\wedge k}=(A^{\wedge k})^r$ \cite{Hiai20142}. Then as \begin{equation}
		\det(e^{(1-t)H/2}e^{tK}e^{(1-t)H/2})=\det(e^H)^{1-t}\det(e^K)^t=\det(e^{H}\#_t e^{K}),
	\end{equation}
	it suffices to show that \begin{equation}
		||e^{(1-t)H/2}e^{tK}e^{(1-t)H/2}||\leq ||e^{H}\#_t e^{K}||.		
	\end{equation}
	
\p	Suppose $e^{H}\#_t e^{K}	\leq I$. We use the following relation \cite{FURUTA1995139} for all $A\in P_n$, $B\in M_n$ invertible, and real number $r$: \begin{equation}
		(BAB^\ast)^r=BA^{1/2}(A^{1/2}B^\ast BA^{1/2})^{r-1}A^{1/2}B^\ast.	
	\end{equation}
	Then \begin{align}
		&e^{H/2}(e^{-H/2}e^Ke^{-H/2})^t e^{H/2}\leq I \\
		&\qquad \Rightarrow (e^{-H/2}e^Ke^{-H/2})^t \leq e^{-H} \\
		&\qquad \Rightarrow e^{-H/2}e^{K/2}(e^{K/2}e^{-H}e^{K/2})^{t-1}e^{K/2}e^{-H/2} \leq e^{-H} \\
		&\qquad \Rightarrow (e^{K/2}e^{-H}e^{K/2})^{t-1}\leq e^{-K}.
	\end{align}

\p	We now apply the Furuta Inequality \cite{Furuta1987} for $A\geq B\geq 0$, $r\geq 0$, $p\geq 0$, $q\geq 1$ with $(1+2r)q\geq p+2r$: \begin{equation}
		(A^r B^p A^r)^{1/q}\leq A^{(p+2r)/q}.
	\end{equation}
	We choose $A=e^{-K}$, $B=(e^{K/2}e^{-H}e^{K/2})^{t-1}$, $r=\frac{1}{2}$, $p=(t-1)^{-1}$, $q=(t-1)^{-1}$. Note that as $t\in[1,2]$ then $t-1\in[0,1]$, and $q\geq 1$. Furthermore, \begin{equation}
		(1+2r)q-(p+2r)=\frac{2}{t-1}-\frac{1}{t-1}-1=\frac{1}{t-1}-1\geq 0,
	\end{equation}
	so all the hypotheses of the inequality are satisfied. Then \begin{equation}
		A^{(p+2r)/q}=e^{-(\frac{1}{t-1}+1)(t-1)K}=e^{-tK},
	\end{equation}
	and \begin{align}
		(A^r B^p A^r)^{1/q}=(e^{-K/2}((e^{K/2}e^{-H}e^{K/2})^{t-1})^{1/(t-1)}e^{-K/2})^{t-1}=e^{-(t-1)H}.
	\end{align}
	Therefore, we conclude \begin{equation}
		e^{-(t-1)H}\leq e^{-tK},	
	\end{equation}
	which implies that \begin{equation}
		e^{tK}\leq e^{(t-1)H}	
	\end{equation}
	and so \begin{equation}
		e^{(1-t)H/2}e^{tK}e^{(1-t)H/2}\leq I.		
	\end{equation}
\end{proof}

\p Hiai provides a full analysis on the connection to R\'{e}nyi divergences by noting that in the $t=2$ case, $e^H\#_te^K=e^{(1-t)H}e^{tK}$, and one can use the same technique taking the derivative:
\begin{align}
	\frac{d}{dt}	\Tr\left[e^{H}\#_t e^{K}  \right]\Big|_{t=2}=
	&=\Tr\left[e^{H/2}\log(e^{-H/2}e^{K}e^{-H/2})(e^{-H/2}e^{K}e^{-H/2})^2e^{H/2} \right]\\
	&=\Tr\left[e^{H/2}e^{-H/2}e^{K}e^{-H/2}\log(e^{-H/2}e^{K}e^{-H/2})e^{-H/2}e^{K}e^{-H/2}e^{H/2} \right]\\
	&=\Tr\left[e^{K}e^{-H/2}\log(e^{-H/2}e^{K}e^{-H/2})e^{-H/2}e^{K} \right]\\
	&=\Tr\left[e^{-H/2}e^{2K}e^{-H/2}\log(e^{-H/2}e^{K}e^{-H/2}) \right]
\end{align}
which using 
\begin{equation}
	\frac{d}{dt}\Tr\left[e^{(1-t)H}e^{tK} \right] \Big|_{t=2}=\Tr\left[(K-H)e^{-H}e^{2K}\right]
\end{equation}
gives \begin{equation}
	\Tr\left[e^{-H/2}e^{2K}e^{-H/2}\log(e^{-H/2}e^{K}e^{-H/2}) \right]\leq \Tr\left[e^{-H}e^{2K}(K-H)\right]
\end{equation}

\p Theorem \ref{mylog1} gives the same derivative identity at $t=2$. We thus conjecture that similar methods to \cite{HiaiPetz} it may be possible to show that \begin{equation}
	\Tr\left[e^{(1-t)H/2}e^{tK}e^{(1-t)H/2} \right]\leq \Tr\left[e^{H}\#_t e^{K} \right]\qquad\qquad\qquad 1\leq t\leq 2
\end{equation}
also implies 
\begin{equation}
	\Tr\left[e^{H}\#_t e^{K} \right]\leq \Tr\left[e^{(1-t)H/2}e^{tK}e^{(1-t)H/2} \right]\qquad\qquad\qquad 2\leq t.
\end{equation}

\section{Acknowledgements}
Thank you to Professor Eric A Carlen for suggesting the problem and providing an introduction to the general subject. 

This research was funded by the NDSEG Fellowship, Class of 2017.

\bibliographystyle{spmpsci}   
\bibliography{references} 

\begin{thebibliography}{10}
\providecommand{\url}[1]{{#1}}
\providecommand{\urlprefix}{URL }
\expandafter\ifx\csname urlstyle\endcsname\relax
  \providecommand{\doi}[1]{DOI~\discretionary{}{}{}#1}\else
  \providecommand{\doi}{DOI~\discretionary{}{}{}\begingroup
  \urlstyle{rm}\Url}\fi

\bibitem{ANDO1994113}
Ando, T., Hiai, F.: Log majorization and complementary {Golden-Thompson} type
  inequalities.
\newblock Linear Algebra and its Applications \textbf{197-198}, 113 -- 131
  (1994).
\newblock \doi{https://doi.org/10.1016/0024-3795(94)90484-7}.
\newblock
  \urlprefix\url{http://www.sciencedirect.com/science/article/pii/0024379594904847}

\bibitem{Araki1990}
Araki, H.: On an inequality of {Lieb} and {Thirring}.
\newblock Letters in Mathematical Physics \textbf{19}(2), 167--170 (1990).
\newblock \doi{10.1007/BF01045887}.
\newblock \urlprefix\url{https://doi.org/10.1007/BF01045887}

\bibitem{BhatiaHolbrook}
Bhatia, R., Holbrook, J.: {Riemannian} geometry and matrix geometric means.
\newblock Linear Algebra and its Applications \textbf{413}(2), 594 -- 618
  (2006).
\newblock \doi{https://doi.org/10.1016/j.laa.2005.08.025}.
\newblock
  \urlprefix\url{http://www.sciencedirect.com/science/article/pii/S0024379505004350}.
\newblock Special Issue on the 11th Conference of the International Linear
  Algebra Society, Coimbra, 2004

\bibitem{Carlen2018}
Carlen, E.A., Lieb, E.H.: Some trace inequalities for exponential and
  logarithmic functions.
\newblock Bulletin of Mathematical Sciences  (2018).
\newblock \doi{10.1007/s13373-018-0123-3}.
\newblock \urlprefix\url{https://doi.org/10.1007/s13373-018-0123-3}

\bibitem{Furuta1987}
Furuta, T.: $a \geq b \geq 0$ assures $(b^r a^p b^r)^{1/q}\geq b^{(p+2r)/q}$
  for $r \geq 0, p \geq 0, q \geq 1$ with $(1 + 2r)q \geq p + 2r$.
\newblock Proceedings of the American Mathematical Society \textbf{101}(1),
  85--88 (1987).
\newblock \urlprefix\url{http://www.jstor.org/stable/2046555}

\bibitem{FURUTA1995139}
Furuta, T.: Extension of the {Furuta} inequality and {Ando-Hiai}
  log-majorization.
\newblock Linear Algebra and its Applications \textbf{219}, 139 -- 155 (1995).
\newblock \doi{https://doi.org/10.1016/0024-3795(93)00203-C}.
\newblock
  \urlprefix\url{http://www.sciencedirect.com/science/article/pii/002437959300203C}

\bibitem{Golden}
Golden, S.: Lower bounds for the helmholtz function.
\newblock Phys. Rev. \textbf{137}, B1127--B1128 (1965).
\newblock \doi{10.1103/PhysRev.137.B1127}.
\newblock \urlprefix\url{https://link.aps.org/doi/10.1103/PhysRev.137.B1127}

\bibitem{GTEqualityCases}
Hiai, F.: Equality cases in matrix norm inequalities of {Golden-Thompson} type.
\newblock Linear and Multilinear Algebra \textbf{36}(4), 239--249 (1994).
\newblock \doi{10.1080/03081089408818297}.
\newblock \urlprefix\url{https://doi.org/10.1080/03081089408818297}

\bibitem{Hiai2019}
Hiai, F.: Log-majorization related to {R\'{e}nyi} divergences.
\newblock Linear Algebra and its Applications \textbf{563}, 255 -- 276 (2019).
\newblock \doi{https://doi.org/10.1016/j.laa.2018.11.004}.
\newblock
  \urlprefix\url{http://www.sciencedirect.com/science/article/pii/S0024379518305275}

\bibitem{HiaiPetz}
Hiai, F., Petz, D.: The {Golden-Thompson} trace inequality is complemented.
\newblock Linear Algebra and its Applications \textbf{181}, 153--185 (1993).
\newblock \doi{10.1016/0024-3795(93)90029-N}

\bibitem{Hiai2014}
Hiai, F., Petz, D.: Introduction To Matrix Analysis And Applications, 1 edn.,
  chap.~6, pp. 227--271.
\newblock Springer International Publishing, Cham (2014)

\bibitem{Kian2019}
Kian, M., Seo, Y.: Norm inequalities related to the matrix geometric mean of
  negative power.
\newblock Scientiae Mathematicae Japonicae \textbf{82}(3), 191--200 (2019).
\newblock \doi{10.32219/isms.82.3_191}

\bibitem{Kubo1980}
Kubo, F., Ando, T.: Means of positive linear operators.
\newblock Mathematische Annalen \textbf{246}(3), 205--224 (1980).
\newblock \doi{10.1007/BF01371042}.
\newblock \urlprefix\url{https://doi.org/10.1007/BF01371042}

\bibitem{PuszWormean}
Pusz, W., Woronowicz, S.: Functional calculus for sesquilinear forms and the
  purification map.
\newblock Reports on Mathematical Physics \textbf{8}(2), 159 -- 170 (1975).
\newblock \doi{https://doi.org/10.1016/0034-4877(75)90061-0}.
\newblock
  \urlprefix\url{http://www.sciencedirect.com/science/article/pii/0034487775900610}

\bibitem{skovgaard}
Skovgaard, L.T.: A {Riemannian} geometry of the multivariate normal model.
\newblock Scandinavian Journal of Statistics \textbf{11}(4), 211--223 (1984)

\bibitem{Thompson}
Thompson, C.J.: Inequality with applications in statistical mechanics.
\newblock Journal of Mathematical Physics \textbf{6}(11), 1812--1813 (1965).
\newblock \doi{10.1063/1.1704727}.
\newblock \urlprefix\url{https://doi.org/10.1063/1.1704727}

\end{thebibliography}
\end{document}